\theoremstyle{definition}
\newtheorem{theorem}{Theorem}
\newtheorem{lemma}[theorem]{Lemma}
\newtheorem{proposition}[theorem]{Proposition}
\theoremstyle{definition}
\newtheorem{definition}{Definition}
\newtheorem{remark}{Remark}
\newtheorem{assumption}{Assumption}
\newcommand{\re}{{\mathrm{Re}}}
\newcommand{\im}{{\mathrm{Im}}}
\newcommand{\rr}{{\mathbb{R}}}
\begin{document}

\title{Towards Distributed Stability Analytics of Dynamic Power Systems: A Phasor-Circuit Theory Perspective}
\author{Peng Yang\aref{thu},
        Feng Liu\aref{thu},
        Zhaojian Wang\aref{thu},
    Shicong Ma\aref{JF}}


\affiliation[thu]{Department of Electrical Engineering, Tsinghua University, Beijing 100084, China 
        \email{lfeng@tsinghua.edu.cn}}
\affiliation[JF]{China Electric Power Research Institute, Beijing 100192, China
        \email{mashicong@epri.sgcc.com.cn}}

\maketitle

\begin{abstract}
With the rapid development of renewable and distributed energies, the underlying dynamics of power systems are no longer dominated by large synchronous generators, but by numerous dynamic components with heterogeneous characteristics. In such a situation, the traditional stability analysis method may fail due to the challenges of heterogeneity and scalability. In this paper, we handle this issue by fundamental circuit theory. Inspired by the work of Brayton and Moser in the nonlinear RLC circuit, we extend the concept of the voltage potential to phasor circuits and offer new results into the distributed stability analytics in power systems. We show that under certain distributed passivity-like conditions the system-wide stability can be ensured. The simulation of a 3-bus system is also provided to verify our results.
\end{abstract}

\keywords{Power Systems Stability, Distributed Stability Criterion, Phasor-Circuit Theory, Convexity, Passivity.}

\footnotetext{This work is supported by the project "Research on The Change of Stability Characteristics in Power Systems with High Renewable Energy Penetration".}

\section{Introduction}

Stability is the primary concern in power systems. In recent years, however, with the rapid development of renewable energies and distributed energy technologies, the underlying dynamics of power systems are changing and deterioration of system-wide stability has been witnessed \cite{ts,os}.

Traditionally, the stability analysis only involves the dynamics of synchronous machines and is carried out in a centralized manner \cite{Kundur}. These center-based methods may fail since the dynamics of the power system will be no longer dominated by similar synchronous machines but most-likely consists of numerous heterogeneous dynamic components \cite{path}.
The challenges under such a circumstance are mainly two folds: heterogeneity and scalability. This motivates distributed stability analytic methods which adapt to heterogeneous components' models and can be carried out distributedly while guaranteeing the system-wide stability.

Many efforts have been putting into this task recently. Methods based on distributed analysis of the Jacobian matrix \cite{Song2017,xie,Illic} and the transfer function \cite{ds2,Pates} have been proposed. These methods can analyze the small-signal stability of the interconnected system in by distributed conditions. 
Another approach is based on the concept of passivity or dissipativity \cite{Caliskan,Kasis,Spanias}. In this approach, the system-wide stability is induced as long as each component meets certain passivity condition. The biggest challenge in this approach is to find the right passivity condition to minimize its conservativeness and improve its applicability while maintaining the system-wide stability.
In addition, methods based on 
linear matrix inequalities\cite{xie2}, sum-of-square technique and vector Lyapunov functions \cite{Kundu} are also proposed. These computation-based methods decompose the task of system-wide stability assessment into several distributed calculations. 
The aforementioned methods, however, are still insufficient to meet the urgent requirement in heterogeneous scalable power systems, since they either limit to the small-signal stability or suffer
from the computational burden.

In this paper, we turn to fundamental circuit theories and present another perspective to handle the issue of distributed stability with concerns about heterogeneity and scalability. As we only concern the dynamics near the nominal frequency, the AC power system can be regarded as a circuit in the sinusoidal quasi-steady state \cite{Kundur} and is essentially a phasor circuit as defined in this paper. 
We extend the idea of Brayton and Moser \cite{BM,BM2} , which was first proposed in 1964 to analyze the stability of topologically complete nonlinear RLC circuits, to a class of phasor circuit with special applications to power systems. We first define the voltage potential of the phasor circuit following the line in \cite{BM}. Then we explore its properties with mathematical tools in complex analysis and dynamic systems. Based on that, we provide a passivity-like condition for each component to guarantee the system-wide stability in power systems. Due to the space limit, we present several claims without proof in this paper.

The rest of this paper is organized as follows. Section 2 introduces some basic concepts and formulates the problem; the voltage potential of the phasor circuit is defined and analyzed in Section 3; the distributed stability issue is addressed in Section 4; a numerical example is illustrated in Section 5; and Section 6 concludes this paper.

Notations: $j$ is the imaginary unit; $\rr$ and $\mathbb{C}$ are the sets of real and complex numbers, respectively; $\rr_{\geq0}$ is the set of non-negative real numbers; superscript $^*$ is the complex conjugate; $\text{col}(x_1,x_2)$ is a column vector with entries $x_1$ and $x_2$; $\textbf{0}_n\in\rr^n$ denotes a vector with all zeros entries; for $x\in\mathbb{C}$, $\im x$ and $\re x$ stand for the imaginary and the real part of $x$, respectively.

\section{Problem Formulation}
\subsection{Phasor Representation of Power Systems}
Consider a symmetric AC three-phase power system. Electrical quantities in such a system have the following form.
\begin{definition}\cite{Akagi}
	A function of time $x_{abc}:\rr_{\geq0}\to\rr^3$ is called a symmetric AC three-phase signal if it is described by
	\begin{equation}\label{eq:3phase}
	x_{abc}(t)=\begin{bmatrix}
	x_a(t)\\x_b(t)\\x_c(t)
	\end{bmatrix}=A(t)\begin{bmatrix}
	\sin(\theta(t))\\\sin(\theta(t)-\frac{2\pi}{3})\\\sin(\theta(t)+\frac{2\pi}{3})
	\end{bmatrix}
	\end{equation}
	where $A:\rr_{\geq0}\to\rr_{\geq0}$ is called the amplitude and $\theta:\rr_{\geq0}\to\rr$ is called the phase angle.
\end{definition}
Note that both the amplitude and phase angle may change with time. For the simplicity of notations, we will omit the time argument whenever it is clear in the context. Due to the symmetry in \eqref{eq:3phase}, a coordinate transformation, known as the $dq0$-transformation, is introduced to simplify the analysis.
\begin{definition}\cite{Schiffer}
	Let $x:\rr_{\geq0}\to\rr^3$ and $\phi:\rr_{\geq0}\to\rr$. Consider the matrix function $T_{dq0}:\rr\to\rr^{3\times3}$
	\begin{equation*}
	T_{dq0}(\phi)=\sqrt{\frac{2}{3}}\begin{bmatrix}
	\cos(\phi)&\cos(\phi-\frac{2\pi}{3})&\cos(\phi+\frac{2\pi}{3})\\
	\sin(\phi)&\sin(\phi-\frac{2\pi}{3})&\sin(\phi+\frac{2\pi}{3})\\
	\frac{\sqrt{2}}{2}&\frac{\sqrt{2}}{2}&\frac{\sqrt{2}}{2}
	\end{bmatrix}
	\end{equation*}
	Then, the mapping $f_{dq0}:\rr^3\times\rr\to\rr^3$
	\begin{equation}\label{eq:dq0}
	f_{dq0}(x(t),\phi(t))=T_{dq0}(\phi)x(t)
	\end{equation}
	is called the $dq0$-transformation.
\end{definition}

Apply the $dq0$-transformation to the symmetric AC three-phase signal $x_{abc}$ yields
\begin{equation}
x_{dq0}=\begin{bmatrix}
x_d\\x_q\\x_0
\end{bmatrix}=T_{dq0}(\phi)x_{abc}=\sqrt{\frac{3}{2}}A\begin{bmatrix}
\sin(\theta-\phi)\\\cos(\theta-\phi)\\0
\end{bmatrix}
\end{equation}
Since $x_0(t)\equiv0$, a symmetric AC three-phase signal is totally dictated by its $dq$ components as follows.
\begin{equation}\label{eq:dq}
x_{dq}=\begin{bmatrix}
x_d\\x_q
\end{bmatrix}=\sqrt{\frac{3}{2}}A\begin{bmatrix}
\sin(\theta-\phi)\\\cos(\theta-\phi)
\end{bmatrix}
\end{equation} 
Let a complex number $\bar{X}=\sqrt{\frac{3}{2}}Ae^{j(\theta-\phi)}$. It follows that
\begin{equation}\label{eq:phasordq}
\bar{X}=x_q+jx_d
\end{equation}
We call it the phasor representation of $x_{abc}$ and refer $\bar{X}$ as a phasor, which is defined rigorously as follows.
\begin{definition}
	A function of time $\bar{X}:\rr_{\geq0}\to\mathbb{C}$ is called a phasor if it is described by\begin{equation}\label{eq:phasor}
	\bar{X}(t)=X(t)e^{j\phi(t)}
	\end{equation}
	where $X:\rr_{\geq0}\to\rr_{\geq0}$ is called the magnitude and $\phi:\rr_{\geq0}\to\rr$ is called the phase angle. We also denote $\bar{X}(t):=X(t)\angle\phi(t)$.
\end{definition}
\begin{remark}
	Our definition of phasor follows the line in \cite{Zhong2012}, where the magnitude and angle are both functions of time. Note, however, in the study of the sinusoidal steady-state circuit, a phasor usually means a constant complex number.
\end{remark}

It has been shown in \cite{Schiffer} that the dynamic state variables of a symmetric AC three-phase power system can be expressed as phasors via time-scale separation. This justifies all phasor-based dynamic models for power system stability analysis, such as the well-known network-reduction and network-preserving model \cite{ds}.

\subsection{Power Systems as Phasor Circuits}\label{sec:2.2}
Consider a power system represented in the phasor coordinate as introduced previously. We now show that such a system can be regarded as a circuit with phasor electrical quantities which we call a \textit{phasor circuit} in this paper.

The symmetric AC three-phase power system can be abstracted as a directed graph $G=(\cal V,\cal E)$, where $\cal V$ is the set of nodes and $\cal E$ is the set of branches. Each branch stands for a symmetric three-phase component in the power system. Assume graph $G$ has $b$ branches and $n$ nodes. Among all nodes, one specific node corresponds to the \textit{ground} which serves as a magnitude reference. 
Each branch $\mu\in\cal E$ is associated with a symmetric AC three-phase voltage $v_{abc}^\mu$ and current $i_{abc}^\mu$, as well as their phasor representation $\bar{V}_\mu$ and $\bar{I}_\mu$. We assume the voltages and currents take the associated reference direction\footnote{The associated reference direction means that positive current is defined as flowing into the terminal which is defined to have positive voltage. Note that these directions may be different from the direction of the actual current flow and voltage.} endowed by $G$. 

As components are interconnected electrically, their voltages and currents are constrained by the Kirchhoff's current law (KCL) and the Kirchhoff's voltage law (KVL) as follows. 
\begin{equation}\label{eq:KL}
\sum_{node}\pm i^\mu_{abc}=\textbf{0}_3,\quad\sum_{loop}\pm v^\mu_{abc}=\textbf{0}_3
\end{equation}
where $\pm$ means that the signal takes proper sign accordingly. 
\begin{proposition}
	Consider a symmetric AC three-phase power system $G=(\mathcal{V},\mathcal{E})$. If the $dq0$-transformation \eqref{eq:dq0} with a uniform $\phi(t)$ is applied to each symmetric AC three-phase line current $i_{abc}^\mu$ and voltage $v_{abc}^\mu$, then the corresponding phasors $\bar{I}_\mu$ and $\bar{V}_\mu$ satisfy the following KCL and KVL.
	\begin{equation}\label{eq:KLp}
	\sum_{node}\pm \bar{I}_\mu=0,\quad\sum_{loop}\pm \bar{V}_\mu=0
	\end{equation}
\end{proposition}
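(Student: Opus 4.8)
The plan is to exploit the linearity of the $dq0$-transformation together with the crucial hypothesis that the angle $\phi(t)$ is \emph{uniform} across all branches. Since KCL and KVL in \eqref{eq:KL} are linear relations among the three-phase signals with coefficients $\pm 1$, and since at each fixed time $t$ the matrix $T_{dq0}(\phi(t))$ is identical for every branch, the transformation can be pulled through the node and loop summations. This reduces the whole proposition to the elementary fact that a fixed linear map commutes with a finite signed sum.

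Concretely, for the current law I would left-multiply the node equation in \eqref{eq:KL} by $T_{dq0}(\phi)$ and use linearity together with uniformity of $\phi$:
\begin{equation*}
\sum_{node}\pm i^\mu_{dq0} = \sum_{node}\pm T_{dq0}(\phi)\, i^\mu_{abc} = T_{dq0}(\phi)\sum_{node}\pm i^\mu_{abc} = T_{dq0}(\phi)\,\textbf{0}_3 = \textbf{0}_3.
\end{equation*}
Reading off the $d$ and $q$ rows (the zero row is trivial for symmetric signals, where $x_0\equiv0$) gives $\sum_{node}\pm i^\mu_d = 0$ and $\sum_{node}\pm i^\mu_q = 0$ separately. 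By the phasor identity \eqref{eq:phasordq}, $\bar{I}_\mu = i^\mu_q + j\, i^\mu_d$, so combining the two scalar identities yields
\begin{equation*}
\sum_{node}\pm\bar{I}_\mu = \sum_{node}\pm i^\mu_q + j\sum_{node}\pm i^\mu_d = 0,
\end{equation*}
which is exactly the phasor KCL in \eqref{eq:KLp}. The phasor KVL then follows by the identical argument applied to the loop equation in \eqref{eq:KL}, with $v^\mu_{abc}$ and $\bar{V}_\mu$ in place of $i^\mu_{abc}$ and $\bar{I}_\mu$.

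I expect no serious technical obstacle here; the entire content of the statement rests on the single structural fact that a linear, branch-independent map commutes with the $\pm$-signed node and loop sums. The only point requiring genuine care is precisely where the uniformity hypothesis enters: if $\phi$ were allowed to differ from branch to branch, the matrix $T_{dq0}(\phi_\mu)$ could not be factored out of the summation, and the conservation laws would not in general survive the passage to the $dq$ (hence phasor) coordinates. Thus the main thing to make explicit in the write-up is that uniformity of $\phi(t)$ is exactly what turns the $dq0$-transformation into a global change of basis under which KCL and KVL are invariant.
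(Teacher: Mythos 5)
Your proposal is correct and takes essentially the same route as the paper: the paper's proof is exactly the observation that, with uniform $\phi(t)$, the matrix $T_{dq0}(\phi)$ factors out of the signed node and loop sums, so $\sum\pm x^\mu_{dq0}=T_{dq0}(\phi)\sum\pm x^\mu_{abc}=\textbf{0}_3$. You merely make explicit the final step (assembling $\bar{X}=x_q+jx_d$ from the two scalar identities via \eqref{eq:phasordq}) that the paper leaves implicit.
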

\begin{proof}
	Clearly with a uniform $\phi(t)$ we have
	$$\sum\pm x^\mu_{dq0}=\sum\pm T_{dq0}(\phi)x^\mu_{abc}=T_{dq0}(\phi)\sum\pm x^\mu_{abc}=\textbf{0}_3$$
	which completes the proof.
\end{proof}
It shows that if the $dq0$-transformation with a uniform $\phi$ is applied to the power system, then the system can be regarded as a phasor circuit. That means the power system is essentially a graph $G=(\mathcal{V},\mathcal{E})$ with branch currents and voltages as phasors and obeys the fundamental KCL and KVL.
In the associated reference direction, the inner product of branch voltage phasor and negative current phasor defines the \textit{complex power generation} in the corresponding branch,
\begin{equation}\label{eq:S}
S_\mu=-\bar{I}_\mu^*\bar{V}_\mu=P_\mu+jQ_\mu
\end{equation}
where $P_\mu$ and $Q_\mu$ are the active and reactive power generated in branch $\mu$, respectively.

A component in the power system is abstracted as a branch in $G$, which determines the relation between $\bar{V}_\mu$ and $\bar{I}_\mu$ locally. We assume that the power system consists of dynamic voltage sources/loads, transmission lines, and constant power loads. All these components can be divided into two kinds: the static and the dynamic. We denote the set of branches associated with static and dynamic components by $\cal S$ and $\cal D$, respectively. We have $\cal E=\cal S\cup\cal D$.

The static component establishes a mapping between $\bar{V}_\mu$ and $\bar{I}_\mu$, which is also called the voltage-current characteristic in the context of circuit theories. The static component can be generically modeled as
\begin{equation}
g_\mu(\bar{V}_\mu,\bar{I}_\mu)=0
\end{equation}
In this paper, we consider two kinds of static component in a power system. One is the linear admittance $y_\mu$ which is used to model the transmission line in a power system \cite{Kundur}. It gives a static relation as 
\begin{equation}\label{eq:static}
\bar{I}_\mu-y_\mu \bar{V}_\mu=0
\end{equation}
The other is the constant power source/load
\begin{equation}
\bar{I}_\mu^*\bar{V}_\mu-P_\mu^0-jQ_\mu^0=0
\end{equation}
Denote the sets of these two type branches by $\mathcal{S}_1$ and $\mathcal{S}_2$.

The dynamic component relates $\bar{V}_\mu$ and $\bar{I}_\mu$ by differential equations.
We consider a generic model for the dynamic component as follows.
\begin{equation}\label{eq:dynamic}
\dot{x}_\mu=f_\mu(x_\mu,u_\mu)
\end{equation}
where $x_\mu=\text{col}(\xi_\mu,V_\mu,\theta_\mu)\in\mathcal{X}_\mu\times\rr_{>0}\times\rr$ is the state variable of dynamic component $\mu$, and $\xi_\mu\in\mathcal{X}_\mu$ is the auxiliary state variable which includes the heterogeneous dynamics of each component. The input is the power generation in the branch $u_\mu=(P_\mu,Q_\mu)\in\rr^2$. $f_\mu:\;\mathcal{X}_\mu\times\rr_{>0}\times\rr^3\to\mathcal{X}_\mu\times\rr_{>0}\times\rr$ is a continuously differentiable function.
Noting that $\bar{I}_\mu=-(\frac{P_\mu+jQ_\mu}{V_\mu})^*$, component \eqref{eq:dynamic} determines a dynamic relation between $\bar{V}_\mu$ and $\bar{I}_\mu$.

In a power system, the generic formulation \eqref{eq:dynamic} can represent a wide variety of dynamics, such as the synchronous machine \cite{Kundur}, the inverter-interfaced power source in grid-feeding mode \cite{Schiffer}, and loads with frequency and voltage response \cite{Kasis}.
To state our results, we make the following assumption on the circuit topology of a power system.
\begin{assumption}\label{as:1}
	The graph of the circuit is connected. And all dynamic branches have one end connecting to the ground, i.e., they are connected to the power system in a parallel fashion.
\end{assumption}

Assumption \ref{as:1} is usually true for a power system since generators and loads are usually connected to the ground.

Combining all components' relations and the circuit interconnection constrain, the entire phasor circuit can be modeled by a set of differential algebraic equations (DAEs) as follows.
\begin{equation}\label{eq:entire}
\left\lbrace \begin{aligned}
\dot{x}_\mu&=f_\mu(x_\mu,u_\mu),\;\mu\in\mathcal{D}\\
0&=g_\mu(\bar{V}_\mu,\bar{I}_\mu),\;\mu\in\mathcal{S}\\
0&=\sum_{node}\pm \bar{I}_\mu,\;
0=\sum_{loop}\pm \bar{V}_\mu
\end{aligned}\right. 
\end{equation}
\begin{assumption}\label{as:2}
	The DAEs \eqref{eq:entire} of a phasor circuit  is of index-1 \cite{DAE}.
\end{assumption}
Assumption \ref{as:2} is very common in the study of network-preserving power system model \cite{ds}. It ensures the mapping from the state variables to algebraic is one-to-one locally.

\section{Voltage Potential of Phasor Circuits}
\label{sec:2}
Consider a phasor circuit $G$. The set of branch voltages $\bar{V}=(\bar{V}_1,\bar{V}_2,\ldots,\bar{V}_b)$ and the set of branch currents $\bar{I}=(\bar{I}_1,\bar{I}_2,\ldots,\bar{I}_b)$ are vectors in $\mathbb{C}^b$. The KCL and the KVL of phasor circuits \eqref{eq:KLp} put a linear constrain to the domain of $\bar{I}$ and $\bar{V}$. Denote $\mathcal{I}$ and $\mathcal{V}$ as two subsets of $\mathbb{C}^b$ such that every $\bar{I}\in\cal{I}$ and $\bar{V}\in\mathcal{V}$ satisfies \eqref{eq:KLp}. We have both $\cal I$ and $\cal V$ are subspaces of $\mathbb{C}^b$ as \eqref{eq:KLp} are linear. Define the inner product of two vectors in $x,\;y\in\mathbb{C}^b$ as $<x,y>=\sum_{\mu=1}^{b}xy^*$. Following the lines in \cite[Therome 1]{BM} and \cite[Therome 2]{BM}, we have following two lemmas.

\begin{lemma}
	If $\bar{I}\in\cal I$ and $\bar{V}\in\cal V$, then $<\bar{V},\bar{I}>=0$, i.e. $\cal V \perp \cal I$. 
\end{lemma}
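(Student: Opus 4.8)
This statement is the phasor-circuit form of Tellegen's theorem, and the natural route is purely topological: it should follow from KCL and KVL alone, through the incidence structure of $G$, without invoking any element law. The plan is to encode the interconnection constraints \eqref{eq:KLp} by the node--branch incidence matrix of the directed graph, and then reduce $\langle\bar V,\bar I\rangle$ to a quantity that KCL forces to vanish.

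First I would introduce the incidence matrix $A\in\rr^{n\times b}$ of $G$, whose entry $A_{k\mu}$ equals $+1$ or $-1$ according to whether branch $\mu$ leaves or enters node $k$ in the associated reference direction, and $0$ otherwise. With this matrix the two laws in \eqref{eq:KLp} become algebraic: KCL at every node is exactly $A\bar I=0$, so $\mathcal I=\ker A$; and KVL is exactly the statement that the branch voltages are differences of node potentials, i.e.\ there is a node-potential vector $\bar e\in\mathbb{C}^n$ with $\bar V=A^{\mathsf{T}}\bar e$, so $\mathcal V=\mathrm{range}(A^{\mathsf{T}})$.

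Given these two characterizations the computation is short. For any $\bar I\in\mathcal I$ and any $\bar V=A^{\mathsf{T}}\bar e\in\mathcal V$, and recalling that $A$ is a \emph{real} matrix, I would write
$$\langle\bar V,\bar I\rangle=\sum_{\mu=1}^{b}\bar V_\mu\bar I_\mu^{*}=\sum_{\mu=1}^{b}\Big(\sum_{k}A_{k\mu}\bar e_k\Big)\bar I_\mu^{*}=\sum_{k}\bar e_k\Big(\sum_{\mu}A_{k\mu}\bar I_\mu\Big)^{*}=\sum_{k}\bar e_k\,(A\bar I)_k^{*}=0,$$
where moving the conjugate across $A_{k\mu}$ is legitimate precisely because $A$ is real, and the final equality is KCL. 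Since $\bar I$ and $\bar V$ were arbitrary in their respective subspaces, this yields $\mathcal V\perp\mathcal I$.

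The only genuinely non-routine point is the KVL characterization $\mathcal V=\mathrm{range}(A^{\mathsf{T}})$, namely that every voltage vector whose signed sum around every loop vanishes can be realized by node potentials. This is the standard duality between the cut space $\mathrm{range}(A^{\mathsf{T}})$ and the cycle space $\ker A$ of a connected graph, with connectedness supplied by Assumption \ref{as:1}: one inclusion (any $A^{\mathsf{T}}\bar e$ satisfies KVL) is immediate by telescoping around each loop, while the converse is obtained by fixing a spanning tree and defining consistent node potentials from the tree-branch voltages. I would either cite this graph-theoretic fact or establish it via the spanning-tree construction; everything else reduces to the single conjugate-linear computation above, which the realness of the incidence matrix keeps entirely clean.
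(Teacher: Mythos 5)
Your proof is correct. The paper itself offers no proof of this lemma --- it simply defers to Theorems 1 and 2 of Brayton and Moser \cite{BM}, whose classical argument for real RLC networks is exactly the topological one you reconstruct: KCL as $\bar I\in\ker A$, KVL as $\bar V\in\mathrm{range}(A^{\mathsf T})$ via the cut-space/cycle-space duality of a connected graph, and orthogonality by a one-line computation. You also correctly isolate the only point that genuinely needs checking when passing from Brayton--Moser's real setting to the phasor (complex, sesquilinear) setting of this paper: since the incidence matrix $A$ is real, conjugation commutes with it, so $\sum_\mu \bar V_\mu \bar I_\mu^* = \sum_k \bar e_k (A\bar I)_k^* = 0$ goes through unchanged with the paper's inner product $\langle x,y\rangle=\sum_\mu x_\mu y_\mu^*$. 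The spanning-tree construction you invoke for the converse KVL inclusion is the standard one and is available here because Assumption~\ref{as:1} supplies connectedness; in short, your write-up fills in precisely the proof the paper omitted, by essentially the same route as the cited source.
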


\begin{lemma}\label{lem:line}
	Let $\Gamma$ be a one-dimensional curve in $\mathbb{C}^b$, then we have
	$$\int_\Gamma \sum_{\mu\in\mathcal{V}} \bar{I}_\mu^*d\bar{V}_\mu=0$$
\end{lemma}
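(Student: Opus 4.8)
The plan is to show that the integrand is identically the zero $1$-form along $\Gamma$, so that the integral vanishes irrespective of the shape (or closedness) of the curve. Working with the summation over all $b$ branches of the circuit, the first step is to recognize the $1$-form as the Hermitian pairing of the current vector with the differential of the voltage vector,
\begin{equation*}
\sum_{\mu} \bar{I}_\mu^*\, d\bar{V}_\mu = \langle d\bar{V},\bar{I}\rangle,
\end{equation*}
where $\langle\cdot,\cdot\rangle$ is the inner product on $\mathbb{C}^b$ introduced above (so that $\langle x,y\rangle=\sum_\mu x_\mu y_\mu^*$). It therefore suffices to prove that this pairing vanishes at every point of the curve.

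I would parametrize the curve as $\bar{V}=\bar{V}(s)$, $s\in[s_0,s_1]$, with $\bar{V}(s)\in\mathcal{V}$ throughout and the associated current $\bar{I}(s)\in\mathcal{I}$ obeying KCL. The key structural observation is that $\mathcal{V}$ is a genuine \emph{linear} subspace of $\mathbb{C}^b$: a difference quotient of admissible voltages again satisfies KVL and hence lies in $\mathcal{V}$, and since a finite-dimensional subspace is closed, the tangent $d\bar{V}/ds$ lies in $\mathcal{V}$ as well. Thus at each $s$ we are pairing a vector $d\bar{V}/ds\in\mathcal{V}$ with a vector $\bar{I}\in\mathcal{I}$.

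The conclusion then follows by invoking the orthogonality of the KVL and KCL subspaces, $\mathcal{V}\perp\mathcal{I}$, established in the preceding lemma. Applied to the pair $(d\bar{V}/ds,\bar{I})$ it gives $\langle d\bar{V}/ds,\bar{I}\rangle=0$ for every $s$, so the integrand is the zero $1$-form and
\begin{equation*}
\int_\Gamma \sum_{\mu} \bar{I}_\mu^*\, d\bar{V}_\mu = \int_{s_0}^{s_1} \langle d\bar{V}/ds,\bar{I}\rangle\, ds = 0.
\end{equation*}

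The main obstacle is interpretive rather than computational. One must be careful that the orthogonality lemma is being applied to the \emph{tangent} $d\bar{V}$ and not merely to a point of $\mathcal{V}$; this is legitimate precisely because $\mathcal{V}$ is linear, so differentials of admissible voltages are themselves admissible, and because $\bar{I}$ is assumed to remain in $\mathcal{I}$ all along $\Gamma$ (equivalently, $\Gamma$ traces voltages within the KVL subspace). I would also flag two bookkeeping points to be stated cleanly: the summation index should run over all branches of the circuit rather than over the node set, and the Hermitian (conjugate-linear second slot) convention must be tracked so that $\sum_\mu \bar{I}_\mu^*\,d\bar{V}_\mu$ is matched with $\langle d\bar{V},\bar{I}\rangle$ and not its complex conjugate.
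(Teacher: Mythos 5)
Your proof is correct and is essentially the argument the paper intends: it invokes the orthogonality $\mathcal{V}\perp\mathcal{I}$ from the preceding lemma on the tangent $d\bar{V}/ds$, which lies in the KVL subspace $\mathcal{V}$ by linearity, exactly as in Brayton--Moser's Theorem~2, which the paper cites in lieu of a written proof. Your two bookkeeping flags are also well taken --- the summation index $\mu\in\mathcal{V}$ in the statement is indeed a typo for the branch set, and the conjugation convention matches the paper's inner product $\langle x,y\rangle=\sum_\mu x_\mu y_\mu^*$.
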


\subsection{Line Integral of Static Components}
Now consider the phasor circuit composes of static and dynamic components introduced in Section \ref{sec:2.2}.
By Assumption \ref{as:2}, the voltages and currents of all static network branches are determined by the voltages of dynamic branches. So we can specifically choose $\Gamma$ from a fixed point in $\mathbb{C}^b$ to a variable one in such a manner that $\Gamma$ is a solution trajectory of \eqref{eq:entire}.

Along this $\Gamma$, we have
\begin{equation}\label{eq:SLI}
\int_\Gamma  \bar{I}_\mu^*d\bar{V}_\mu=\int_\Gamma  y_\mu^* \bar{V}_\mu^*d\bar{V}_\mu,\;\forall\mu\in\mathcal{S}_1
\end{equation}
As this complex function clearly violates the Cauchy-Riemann equations, the above line integral is path-dependent \cite{complex}. However, Inspecting the real and imaginary part of this integral separately, we have the following proposition.
\begin{proposition}
	The real part of \eqref{eq:SLI} is path-independent if and only if $\text{Im} y_\mu=0$. And the imaginary part of \eqref{eq:SLI} is path-independent if and only if $\text{Re} y_\mu=0$.
\end{proposition}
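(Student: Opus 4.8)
The plan is to reduce the complex line integral to two real line integrals of differential $1$-forms on $\rr^2$ and then invoke the standard closedness criterion. First I would introduce real coordinates by writing $\bar{V}_\mu = p + jq$ with $p,q\in\rr$ and splitting the constant admittance as $y_\mu = g + jb$, so that $\re y_\mu = g$ and $\im y_\mu = b$. Substituting $y_\mu^* = g - jb$, $\bar{V}_\mu^* = p - jq$, and $d\bar{V}_\mu = dp + j\,dq$ into the integrand $y_\mu^*\bar{V}_\mu^*\,d\bar{V}_\mu$ and collecting terms, the real part becomes the $1$-form $(gp - bq)\,dp + (gq + bp)\,dq$ and the imaginary part becomes $-(gq + bp)\,dp + (gp - bq)\,dq$. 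This expansion is where the factor $\bar{V}_\mu^*$ matters: because the integrand depends on the conjugate it is not holomorphic, which is precisely why the full complex integral is path-dependent, yet the two real forms may individually be exact.

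Next I would apply the criterion that a real line integral $\int_\Gamma M\,dp + N\,dq$ on the simply connected domain $\mathbb{C}\cong\rr^2$ is path-independent if and only if the form is closed, i.e. $\partial M/\partial q = \partial N/\partial p$. For the real part, $M = gp - bq$ and $N = gq + bp$ give $\partial M/\partial q = -b$ and $\partial N/\partial p = b$, so closedness holds exactly when $b = \im y_\mu = 0$. For the imaginary part, $M = -(gq + bp)$ and $N = gp - bq$ give $\partial M/\partial q = -g$ and $\partial N/\partial p = g$, so closedness holds exactly when $g = \re y_\mu = 0$. This delivers both claimed equivalences simultaneously.

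The sufficiency direction is immediate, since a closed form on a simply connected domain is exact and hence has a path-independent integral. For the necessity direction I would argue via Green's theorem: the circulation of either form around any closed loop equals $\iint (\partial_p N - \partial_q M)\,dp\,dq$ over the enclosed region, and here $\partial_p N - \partial_q M$ equals the constant $2b$ (real part) or $2g$ (imaginary part). If this constant were nonzero, the integral over a small disk would be nonzero, contradicting path-independence; hence path-independence forces $b = 0$ (resp. $g = 0$). The only genuine obstacle is careful bookkeeping in the expansion of the integrand and keeping the necessity argument clean; because $y_\mu$ is constant the curl-like term is itself constant, so Green's theorem settles necessity with no analytic subtleties, and no regularity or topology assumptions beyond simple connectedness of the voltage domain are required.
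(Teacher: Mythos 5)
Your proposal is correct and follows essentially the same route as the paper: both expand $y_\mu^*\bar{V}_\mu^*\,d\bar{V}_\mu$ into real and imaginary $1$-forms in the real coordinates of $\bar{V}_\mu$ and then apply Green's theorem (equivalently, the closedness criterion on the simply connected plane) to obtain path-independence if and only if $\im y_\mu=0$ for the real part and $\re y_\mu=0$ for the imaginary part. Your explicit small-disk argument for necessity merely unpacks the Green's theorem step the paper invokes directly, so there is no substantive difference.
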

\begin{proof}
	Denote $y_\mu=g_\mu+jb_\mu$ and $\bar{V}_\mu=x+jy$. Consider the imaginary part of \eqref{eq:SLI} as an example.
	\begin{equation}\label{eq:N2}
	\begin{aligned}
	\text{Im}\int_\Gamma  y_\mu^* \bar{V}_\mu^*d\bar{V}_\mu&=\text{Im}\int_\Gamma  (g_\mu-jb_\mu)(x-jy)d(x+jy)\\
	&=\int_\Gamma  (-b_\mu x-g_\mu y)dx+(g_\mu x-b_\mu y)dy
	\end{aligned}
	\end{equation}
By Green's theorem, the integral above is path-independent if and only if
\begin{equation*}
	\frac{\partial(g_\mu x-b_\mu y)}{\partial x}=\frac{\partial(-b_\mu x-g_\mu y)}{\partial y}
	\Leftrightarrow g_\mu=-g_\mu\Leftrightarrow g_\mu=0
\end{equation*}
Similarly, we can prove that the real part of \eqref{eq:SLI} is path-independent if and only if all $b_\mu=0$.
\end{proof}
\begin{remark}
	In the context of the power system, the preceding proposition states that the integral \eqref{eq:N2} is path-independent if and only if we neglect all the transfer conductances or the susceptances in the network. Since the integral \eqref{eq:N2} is related to the energy function in the transient stability analysis of power systems \cite{Moon1}, the claim here is consistent with the common belief that the existence of non-zero transfer conductances leads to path-dependent term in energy functions.
\end{remark}
\begin{assumption}\label{as:3}
	We assume the transmission lines in the power system are lossless, i.e. $g_\mu=0$, $\forall\mu\in\mathcal{S}_1$.
\end{assumption}
Under this assumption, we have
\begin{equation}\label{eq:N3}
\begin{aligned}
\sum_{\mu\in\mathcal{S}_1}\text{Im}\int_\Gamma  \bar{I}_\mu^*d\bar{V}_\mu=\sum_{\mu\in\mathcal{S}_1}\int_\Gamma  (-b_\mu x)dx+(-b_\mu y)dy\\=\sum_{\mu\in\mathcal{S}_1}-\frac{1}{2}b_\mu(x^2+y^2)\Big|_\Gamma=\sum_{\mu\in\mathcal{S}_1}-\frac{1}{2}b_\mu|\bar{V}_\mu|^2\Big|_\Gamma
\end{aligned}
\end{equation}
which equals half of reactive power in the transmission lines.

For constant power component, by \eqref{eq:S} we have
\begin{equation}\label{eq:S2LI}
\im\int_\Gamma  \bar{I}_\mu^*d\bar{V}_\mu=-\im\int_\Gamma  \frac{P^0_\mu+jQ^0_\mu}{\bar{V}_\mu}d\bar{V}_\mu,\;\forall\mu\in\mathcal{S}_2
\end{equation}
The line integral is path-independent and if we denote $\bar{V}_\mu=V_\mu\angle\theta_\mu$, the integral can be express as
\begin{equation}\label{eq:S2LI1}
\im\int_\Gamma  \bar{I}_\mu^*d\bar{V}_\mu
=-P_\mu^0\theta_\mu-Q_\mu^0\ln V_\mu\Big|_\Gamma
\end{equation}

Now we are ready to define the voltage potential of a phasor circuit.
\begin{definition}
	Consider a phasor circuit satisfying Assumption \ref{as:3}. The function $V_p:\mathbb{C}^b\to\rr$ is called the voltage potential of the phasor circuit.
	\begin{equation}\label{eq:vp}
	\begin{aligned}
	V_p(\bar{V})&=\sum_{\mu\in\mathcal{S}}\text{Im}\int_\Gamma  \bar{I}_\mu^*d\bar{V}_\mu=\sum_{\mu\in\mathcal{S}_1}-\frac{1}{2}b_\mu|\bar{V}_\mu|^2\\&-\sum_{\mu\in\mathcal{S}_2}P_\mu^0\theta_\mu-Q_\mu^0\ln V_\mu
	\end{aligned}
	\end{equation}
\end{definition}
\subsection{Line Integral of Dynamic Components}
By Assumption \ref{as:2}, the end node of dynamic component $\mu$ is the ground. Suppose its non-ground terminal is indexed by $i$, then we have $\bar{V}_\mu=V_i\angle\theta_i$, and $P_\mu$ and $Q_\mu$ are identical to the power generated from the node i.e. $P_\mu+jQ_\mu=P_i+jQ_i$.

So for dynamic components, we have 
\begin{equation}\label{eq:DLI}
\begin{aligned}
\im\int_\Gamma\bar{I}_\mu^*d\bar{V}_\mu&=-\im\int_\Gamma\frac{P_i+jQ_i}{V_ie^{j\theta_i}}d(V_ie^{j\theta_i})\\&=-\int_\Gamma P_id\theta_i+Q_id\ln V_i
\end{aligned}
\end{equation}
This integral is generally path-dependent and is dictated by differential equations \eqref{eq:dynamic}.

Now we are ready to present the key observation in this paper. The proof is straightforward by invoking Lemma \ref{lem:line} and is omitted due to space limit.
\begin{theorem}\label{th:4}
	Consider a phasor circuit \eqref{eq:entire} satisfying Assumption \ref{as:1}-\ref{as:3}. For any solution trajectory  $\gamma:\rr_{\geq0}\to\mathbb{C}^b$, we have
	\begin{equation}\label{eq:thm4}
	\sum_{i\in\cal D}\int_{\gamma(0)}^{\gamma(t)} P_id\theta_i+Q_id\ln V_i=V_p(\gamma(t))-V_p(\gamma(0))
	\end{equation}
	where $V_p$ is the voltage potential of this phasor circuit.
\end{theorem}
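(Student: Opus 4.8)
The plan is to read the identity off directly from the differential Tellegen relation of Lemma \ref{lem:line}, splitting the branch set $\mathcal{E}=\mathcal{S}\cup\mathcal{D}$ and matching each piece against the constructions already carried out in Section 3. The starting observation is that a solution trajectory $\gamma$ of the DAEs \eqref{eq:entire} traces a legitimate one-dimensional curve $\Gamma$ in $\mathbb{C}^b$: here Assumption \ref{as:2} (index-1) is what guarantees that the algebraic variables, and hence the full branch-voltage vector $\bar{V}\in\mathbb{C}^b$, are well-defined functions of the dynamic states along $\gamma$, so that $\Gamma$ is genuinely one-dimensional and Lemma \ref{lem:line} applies.

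First I would invoke Lemma \ref{lem:line} on $\Gamma=\gamma$. At every point of the trajectory $\bar{I}$ obeys KCL and the tangent $d\bar{V}$ obeys KVL (since $\mathcal{V}$ is a subspace), so by the orthogonality $\mathcal{V}\perp\mathcal{I}$ the integrand over the \emph{complete} branch set vanishes, giving $\int_\Gamma\sum_{\mu\in\mathcal{E}}\bar{I}_\mu^*\,d\bar{V}_\mu=0$; in particular its imaginary part is zero. Splitting the branch sum into static and dynamic parts yields
\[
\sum_{\mu\in\mathcal{S}}\im\int_\Gamma \bar{I}_\mu^*\,d\bar{V}_\mu \;+\; \sum_{\mu\in\mathcal{D}}\im\int_\Gamma \bar{I}_\mu^*\,d\bar{V}_\mu \;=\; 0.
\]

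Next I would identify the two sums with the quantities in the statement. For the static part, Assumption \ref{as:3} (lossless lines) makes the $\mathcal{S}_1$ integrals path-independent by the Proposition following \eqref{eq:SLI}, while the $\mathcal{S}_2$ integrals are path-independent by \eqref{eq:S2LI1}; hence the static sum equals precisely the increment of the voltage potential, $V_p(\gamma(t))-V_p(\gamma(0))$, by the definition \eqref{eq:vp}. For the dynamic part, Assumption \ref{as:1} places each dynamic branch $\mu$ in parallel with the ground, so its voltage is the nodal phasor $\bar{V}_\mu=V_i\angle\theta_i$ and its branch power coincides with the nodal injection $P_i+jQ_i$; equation \eqref{eq:DLI} then rewrites each dynamic term as $-\int_\Gamma P_i\,d\theta_i+Q_i\,d\ln V_i$. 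Since the correspondence $\mu\leftrightarrow i$ between dynamic branches and their unique non-ground terminals is a bijection, summation over $\mu\in\mathcal{D}$ is the same as summation over $i\in\mathcal{D}$. Substituting both identifications into the displayed balance and rearranging gives \eqref{eq:thm4}.

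I do not expect a genuine obstacle, since the theorem is essentially a repackaging of Lemma \ref{lem:line}; the only points demanding care are bookkeeping ones. The first is ensuring the differential Tellegen identity is taken over the complete branch set $\mathcal{E}$ rather than $\mathcal{S}$ alone, so that the dynamic branches supply exactly the term that becomes the left-hand side of \eqref{eq:thm4}. The second is the branch-to-node translation under Assumption \ref{as:1}: one must check that ``parallel to ground'' truly forces $\bar{V}_\mu=V_i\angle\theta_i$ with a matching power injection, as this is what converts the abstract branch integral into the physically meaningful $\int P_i\,d\theta_i+Q_i\,d\ln V_i$. Everything else rests on the path-independence established in Section 3, which is precisely what makes $V_p$ a well-defined function of the endpoint and thereby turns the static line integral into a clean potential difference.
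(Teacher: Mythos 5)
Your proof is correct and takes exactly the approach the paper intends: the paper omits the proof, stating only that it is ``straightforward by invoking Lemma \ref{lem:line},'' and your argument---applying the differential Tellegen identity of Lemma \ref{lem:line} over the full branch set $\mathcal{E}=\mathcal{S}\cup\mathcal{D}$, taking imaginary parts, and identifying the static sum with the increment of $V_p$ via \eqref{eq:vp} and the dynamic sum with the left-hand side of \eqref{eq:thm4} via \eqref{eq:DLI}---is precisely that omitted argument. The bookkeeping points you flag (Assumption \ref{as:2} making $\gamma$ a legitimate curve in $\mathbb{C}^b$, and Assumption \ref{as:1} enabling the branch-to-node translation $\bar{V}_\mu=V_i\angle\theta_i$) are handled correctly.
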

\section{Distributed Stability Analytics}\label{sec:4}
\subsection{Distributed Criteria for System-Wide Stability}
In this section, we propose distributed criteria for system-wide stability  based on the voltage potential of phasor circuits.

By law of cosines, the modulus of branches voltage can be rewritten in terms of node voltages. Assume $\bar{V}_\mu=V_i\angle\theta_i-V_k\angle\theta_k$, we have
\begin{equation}\label{eq:bus}
|\bar{V}_\mu|^2=V_i^2+V_k^2-2V_iV_k\cos\theta_{ik}
\end{equation}
Substituting into \eqref{eq:N3} yields
\begin{equation}\label{eq:N4}
\text{Im}\int_\Gamma  \bar{I}_\mu^*d\bar{V}_\mu=\frac{1}{2}B_{ik}(V_i^2+V_k^2-2V_iV_k\cos\theta_{ik})\Big|_\Gamma
\end{equation}
where $B_{ik}$ is element in the network admittance matrix and we have $B_{ik}=-b_{ik}$.

Let $x^e:=\text{col}(\xi^e,V^e,\theta^e)$ be an equilibrium of the phasor circuit system \eqref{eq:entire}. For the voltage potential \eqref{eq:vp}, define the initial point $\bar{V}_0=V_0\angle\theta_0\in\mathbb{C}^b$ corresponding to $x^e$,\footnote{By Assumption \ref{as:2} all algebraic variables is dictated by state variables.} New results: it should be the initial point of the line integral, rather than the equilibrium. and formulate in the bus voltage coordinate \eqref{eq:N4}
\begin{equation}
\begin{aligned}
V_p(V,\theta)&=\sum_{(i,k)\in\mathcal{S}_1 }\frac{1}{2}B_{ik}(V_i^2+V_k^2-2V_iV_k\cos\theta_{ik})\\
&-\sum_{(i,k)\in\mathcal{S}_2}P_{ik}^0\theta_{ik}-Q_{ik}^0\ln V_{ik}
\end{aligned}
\end{equation}
Now consider the Bregman divergence \cite{bd} between $z:=\text{col}(V,\theta)$ and $z_0:=\text{col}(V_0,\theta_0)$ w.r.t. $V_p$ as follows.
\begin{equation}\label{eq:bd}
W(z)=V_p(z)-(z-z_0)^T\nabla V_p(z_0)-V_p(z_0)
\end{equation}
We have $W(z_0)=\nabla W(z_0)=0$ and $\nabla^2 W(z)=\nabla^2 V_p(z)$. Further, the Bregman divergence induced by $V_p$ has the following property.
\begin{lemma}\label{th:bd}
	Consider a phasor circuit \eqref{eq:entire} satisfying Assumption \ref{as:1}-\ref{as:3}. For any solution trajectory  $\gamma:\rr_{\geq0}\to\mathbb{C}^b$, we have
	\begin{equation}\label{eq:thmbd}
	\sum_{i\in\cal D}\int_{\gamma(0)}^{\gamma(t)} \Delta P_id\theta_i+\Delta Q_id\ln V_i=W(\gamma(t))-W(\gamma(0))
	\end{equation}
	where $\Delta P_i=P_i-P_i^e$, $\Delta Q_i=Q_i-Q_i^e$, and $W$ is the Bregman divergence induced by the voltage potential $V_p$.
\end{lemma}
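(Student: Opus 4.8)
The plan is to derive \eqref{eq:thmbd} directly from Theorem \ref{th:4} by decomposing the integrand into an incremental part and a constant equilibrium part. Using $\Delta P_i = P_i - P_i^e$, $\Delta Q_i = Q_i - Q_i^e$ and the linearity of the line integral, I would first write
\begin{equation*}
\begin{aligned}
&\sum_{i\in\mathcal{D}}\int_{\gamma(0)}^{\gamma(t)} \Delta P_i\, d\theta_i + \Delta Q_i\, d\ln V_i \\
&= \sum_{i\in\mathcal{D}}\int_{\gamma(0)}^{\gamma(t)} P_i\, d\theta_i + Q_i\, d\ln V_i - \sum_{i\in\mathcal{D}}\int_{\gamma(0)}^{\gamma(t)} P_i^e\, d\theta_i + Q_i^e\, d\ln V_i .
\end{aligned}
\end{equation*}
Theorem \ref{th:4} identifies the first term on the right with $V_p(\gamma(t)) - V_p(\gamma(0))$, so the whole task reduces to evaluating the second, equilibrium, term and matching it against the linear correction in the Bregman divergence \eqref{eq:bd}.

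Since $P_i^e$ and $Q_i^e$ are constants, the equilibrium term is the integral of an exact one-form and depends only on the endpoints:
\begin{equation*}
\begin{aligned}
&\sum_{i\in\mathcal{D}}\int_{\gamma(0)}^{\gamma(t)} P_i^e\, d\theta_i + Q_i^e\, d\ln V_i \\
&= \sum_{i\in\mathcal{D}} P_i^e\big(\theta_i(t)-\theta_i(0)\big) + Q_i^e\big(\ln V_i(t) - \ln V_i(0)\big).
\end{aligned}
\end{equation*}
I would then recognize this as a linear functional of the displacement of the trajectory in the conjugate coordinates $w := \mathrm{col}(\theta, \ln V)$, paired with the vector $\mathrm{col}(P^e, Q^e)$.

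The crucial link is that $\mathrm{col}(P^e, Q^e)$ is exactly $\nabla V_p$ evaluated at the equilibrium $z_0$ in these coordinates. Indeed, the identity underlying Theorem \ref{th:4} — obtained from Lemma \ref{lem:line} by splitting the branches into $\mathcal{S}$ and $\mathcal{D}$ and taking imaginary parts — holds for \emph{every} curve, hence gives, as an equality of one-forms on the index-1 constraint manifold, $dV_p = \sum_{i\in\mathcal{D}} (P_i\, d\theta_i + Q_i\, d\ln V_i)$. Parametrizing the manifold by the dynamic-node coordinates $w$ this yields $\partial V_p/\partial \theta_i = P_i$ and $\partial V_p/\partial(\ln V_i) = Q_i$, so that $\nabla_w V_p(z_0) = \mathrm{col}(P^e, Q^e)$ because the powers reduce to their equilibrium values at $z_0$. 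Substituting into \eqref{eq:bd} and noting that the constant $V_p(z_0)$ terms cancel in the difference leaves
\begin{equation*}
W(\gamma(t)) - W(\gamma(0)) = \big[V_p(\gamma(t)) - V_p(\gamma(0))\big] - \big(w(t)-w(0)\big)^T \nabla_w V_p(z_0),
\end{equation*}
which is precisely the decomposition of the first paragraph, completing the proof.

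The step I expect to be the main obstacle is the choice of coordinates in the Bregman divergence. The equilibrium term carries $\ln V_i$, so the linear correction matches $\nabla V_p(z_0)$ only if the gradient is taken with respect to $w = \mathrm{col}(\theta, \ln V)$ rather than $\mathrm{col}(V, \theta)$: in the latter the reactive component of $\nabla V_p$ is $Q_i/V_i$, which reproduces $(V_i(t)-V_i(0))\,Q_i^e/V_i^e$ instead of the required $Q_i^e\big(\ln V_i(t)-\ln V_i(0)\big)$. Establishing this conjugacy — so that $(\theta, \ln V)$ are recognized as the natural potentials for the pair $(P, Q)$, consistent with the integrand of Theorem \ref{th:4} — is the key point; once it is in place, the remainder is the formal manipulation of Theorem \ref{th:4} and the definition \eqref{eq:bd} of $W$.
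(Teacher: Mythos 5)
Your proposal is correct, and since the paper states this lemma without proof (it is one of the claims omitted for space), it can only be measured against the evident intended argument: subtract the constant equilibrium one-form from the identity of Theorem \ref{th:4} and absorb it into the linear correction term of the Bregman divergence \eqref{eq:bd} --- which is exactly what you do. Two points deserve emphasis. First, your justification of the gradient identity $\nabla_w V_p(z_0)=\text{col}(P^e,Q^e)$ via the curve-independent (Tellegen-type) version of Lemma \ref{lem:line}, using Assumption \ref{as:2} to parametrize the constraint manifold by the dynamic-node coordinates, is the right way to make that step rigorous rather than merely formal: the identity $dV_p=\sum_{i\in\mathcal{D}}(P_i\,d\theta_i+Q_i\,d\ln V_i)$ is kinematic and does not require $\gamma$ to solve the differential equations, so it may legitimately be read off as an equality of one-forms. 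Second, and more importantly, the coordinate caveat you raise is a genuine wrinkle in the paper as written: with $z=\text{col}(V,\theta)$ as in \eqref{eq:bd}, one has $\partial V_p/\partial V_i = Q_i/V_i$, so the linear correction produces $\sum_i (Q_i^e/V_i^e)(V_i-V_i^e)$, which differs from the required $\sum_i Q_i^e(\ln V_i-\ln V_i^e)$ at second order in $V_i-V_i^e$; hence \eqref{eq:thmbd} would fail whenever some $Q_i^e\neq 0$, and the identity holds exactly when $W$ is the Bregman divergence of $V_p$ taken in the conjugate coordinates $w=\text{col}(\theta,\ln V)$. Your proof therefore not only establishes the lemma but repairs an imprecision in \eqref{eq:bd}; note also that this change of coordinates is harmless for the subsequent convexity condition \eqref{eq:convex}, since $\nabla W(z_0)=0$ implies the Hessians at $z_0$ in the two coordinate systems are related by congruence with the (diagonal, invertible) Jacobian of the change of variables, preserving the semidefiniteness used in Theorem \ref{th:cs}.
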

\begin{theorem}\label{th:osci}
	Consider a phasor circuit \eqref{eq:entire} satisfying Assumption \ref{as:1}-\ref{as:3}. For any bounded trajectory  $\gamma:\rr_{\geq0}\to\mathbb{C}^b$, if 
	\begin{equation}\label{eq:thmoci}
	\int_{\gamma(0)}^{\gamma(t)} \Delta P_id\theta_i+\Delta Q_id\ln V_i\leq0,\;\forall i\in\mathcal{D},\forall t\in\rr_{\geq0}.
	\end{equation}
	and the largest invariant set of $\{\Delta P_id\theta_i+\Delta Q_id\ln V_i=0\}$ only contains equilibrium, then the trajectory $\gamma(t)$ will converge to the set of equilibrium.
\end{theorem}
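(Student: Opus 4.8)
The plan is to use the Bregman divergence $W$ from \eqref{eq:bd} as a Lyapunov function and then invoke LaSalle's invariance principle. By Lemma \ref{th:bd} the change of $W$ along any solution equals the sum of the component line integrals in \eqref{eq:thmbd}, and by construction $W(z_0)=\nabla W(z_0)=0$ with $W$ continuously differentiable wherever the voltage magnitudes stay positive. So the first step is to record that $W$ is a legitimate $C^1$ candidate on the region swept out by the bounded trajectory $\gamma$, whose closure is compact; on that compact set $W$ is bounded, which furnishes the lower bound needed later.

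The crucial step is to upgrade the pointwise inequality \eqref{eq:thmoci} into genuine monotonicity of $W$ along $\gamma$. As literally stated, \eqref{eq:thmoci} only bounds the integral accumulated from the fixed initial time, which combined with Lemma \ref{th:bd} gives $W(\gamma(t))\le W(\gamma(0))$ but not that $W$ decreases between two intermediate times. To close this gap I would exploit the time-invariance of the autonomous system \eqref{eq:entire}: for any $s\ge 0$ the shifted curve $\tau\mapsto\gamma(s+\tau)$ is again a solution trajectory, so applying the hypothesis \eqref{eq:thmoci} to it yields $\int_{\gamma(s)}^{\gamma(t)}\Delta P_i\,d\theta_i+\Delta Q_i\,d\ln V_i\le 0$ for every component $i\in\mathcal D$ and every $t\ge s$. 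Summing over $i$ and using Lemma \ref{th:bd} on the shifted trajectory gives $W(\gamma(t))\le W(\gamma(s))$ for all $t\ge s$; moreover the per-component sign is preserved, so each accumulated integral $g_i(t)=\int_{\gamma(0)}^{\gamma(t)}\Delta P_i\,d\theta_i+\Delta Q_i\,d\ln V_i$ is individually non-increasing, whence its continuous integrand $\Delta P_i\dot\theta_i+\Delta Q_i\dot V_i/V_i\le 0$.

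With monotonicity in hand I would finish by a standard invariance argument. Since $W(\gamma(t))$ is non-increasing and bounded below on the compact closure of the trajectory, it converges to some limit $c$. Boundedness of $\gamma$ makes its $\omega$-limit set $\Omega$ nonempty, compact and invariant, and continuity forces $W\equiv c$ on $\Omega$; hence $\dot W=0$ identically on $\Omega$, i.e. $\sum_{i}(\Delta P_i\dot\theta_i+\Delta Q_i\dot V_i/V_i)=0$ there. But each summand is nonpositive by the previous step, so every summand must vanish on $\Omega$, placing $\Omega$ inside the set $\{\Delta P_i\,d\theta_i+\Delta Q_i\,d\ln V_i=0\}$. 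Since $\Omega$ is itself invariant, it is contained in the largest invariant subset of that set, which by hypothesis consists only of equilibria; therefore $\Omega$ is a set of equilibria and $\gamma(t)\to\Omega$, the claimed convergence.

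I expect the main obstacle to be precisely the monotonicity upgrade of the second paragraph: making rigorous the passage from the as-stated integral inequality to decrease of $W$ between arbitrary times. This hinges on reading \eqref{eq:thmoci} as a property of every solution trajectory, so that time-shift invariance of \eqref{eq:entire} can be applied, and on the per-component sign structure, which is exactly what lets $\dot W=0$ localize the limit set to the desired zero-set. A secondary technical point will be ensuring the voltage magnitudes stay bounded away from zero along $\gamma$, so that $\ln V_i$ and hence $W$ remain $C^1$ and bounded on the relevant compact set.
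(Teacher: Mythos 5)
The paper states Theorem \ref{th:osci} without proof (it is one of the claims the authors explicitly omit for space), but your route --- the Bregman divergence $W$ of \eqref{eq:bd} together with Lemma \ref{th:bd} and a LaSalle invariance argument on the compact closure of the bounded trajectory --- is exactly the argument the paper's development is engineered for, so your proposal is correct and matches the intended proof. Your diagnosis of the one genuine subtlety is also accurate: as literally written, \eqref{eq:thmoci} only constrains the integral accumulated from $\gamma(0)$, which gives $W(\gamma(t))\le W(\gamma(0))$ but not monotonicity (the accumulated integral $g_i$ can dip and rebound while remaining nonpositive), and without monotonicity $W$ need not be constant on the $\omega$-limit set, so LaSalle would not close; your repair via time-shift invariance --- reading \eqref{eq:thmoci} as a property of every solution, which is what the pointwise passivity-like condition \eqref{eq:conx} actually delivers in Theorem \ref{th:cs} --- is the right one, and the per-component sign structure you preserve is precisely what forces each summand, not just the sum, to vanish on the limit set, matching the per-$i$ invariance hypothesis. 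The secondary caveats you flag are likewise the correct ones: the voltages must stay bounded away from zero so that $\ln V_i$ and hence $W$ remain $C^1$ and bounded on the relevant compact set, and the integrand must be expressible as a continuous function of the state, which the index-1 condition of Assumption \ref{as:2} supplies locally.
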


Since the largest invariant set condition in Theorem \ref{th:osci} is usually satisfied for power system models \cite{ds}, Theorem \ref{th:osci} indicates that the unstable patterns of bounded trajectory, such as oscillation, can be precluded if every dynamic component obeys the integral inequality \eqref{eq:thmoci}. 

Consider again the Bregman divergence $W(z)$ \eqref{eq:bd}. We denote the set of equilibrium points which satisfy the convexity condition 
\begin{equation}\label{eq:convex}
E=\left\lbrace x^e:\nabla^2W(z_0)\geq0 \right\rbrace
\end{equation}
where the only zero eigenvalue of $\nabla^2W(z_0)$ comes from the rotational symmetry of phase angles \cite{dorfler}.

The following theorem offers a distributed criterion for stability w.r.t. any equilibrium in $E$.
\begin{theorem}\label{th:cs}
	Consider a phasor circuit \eqref{eq:entire} satisfying Assumption \ref{as:1}-\ref{as:3}. For any $x^e\in E$, if for all $i\in\mathcal{D}$, there exists a continues differentiable scalar function $W_i(x_i)$ such that $W_i$ is locally positive definite at local equilibrium $x_{i}^e$ and satisfies
\begin{equation}\label{eq:conx}
	\dot{W}_i\leq\Delta P_i\dot{\theta}_i+\Delta Q_i \dot{\ln V}_i
\end{equation}
	then the system-wide equilibrium $x^e$ is stable.
\end{theorem}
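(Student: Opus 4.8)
The plan is to assemble a single Lyapunov function for the interconnected DAE \eqref{eq:entire} out of the local storage functions $W_i$ and the network Bregman divergence $W$ of \eqref{eq:bd}, and then to verify its two defining properties—non-increase along trajectories and local positive-definiteness at $x^e$—separately. Concretely, I would take as candidate
$$U(x)=\sum_{i\in\mathcal{D}}W_i(x_i)-W(z),$$
where, by Assumption \ref{as:2}, the algebraic node vector $z=\mathrm{col}(V,\theta)$ is a smooth function of the state $x$. Because every dynamic branch is grounded (Assumption \ref{as:1}), the node coordinates $(V_i,\theta_i)$ entering $W$ are exactly the corresponding components of the states $x_i$, so $U$ is a genuine function on the state space.

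First I would establish $\dot U\le 0$. Differentiating the identity \eqref{eq:thmbd} of Lemma \ref{th:bd} along any solution gives $\dot W=\sum_{i\in\mathcal{D}}\big(\Delta P_i\dot\theta_i+\Delta Q_i\,\dot{\ln V_i}\big)$. Summing the component dissipation inequalities \eqref{eq:conx} over $i\in\mathcal{D}$ yields $\sum_i\dot W_i\le\sum_i\big(\Delta P_i\dot\theta_i+\Delta Q_i\,\dot{\ln V_i}\big)=\dot W$, whence $\dot U=\sum_i\dot W_i-\dot W\le 0$. This step is routine and uses only Lemma \ref{th:bd} together with the hypotheses \eqref{eq:conx}; it is precisely the interconnection-plus-passivity bookkeeping already implicit in Theorem \ref{th:osci}.

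The substantive step—and the one I expect to be the main obstacle—is to show that $U$ is locally positive-definite at $x^e$ modulo the phase-shift symmetry. Since each $W_i$ is locally positive definite at $x_i^e$ we have $W_i(x_i^e)=0$ and $\nabla W_i(x_i^e)=0$, while the Bregman construction \eqref{eq:bd} gives $W(z_0)=\nabla W(z_0)=0$; hence $U(x^e)=0$ and $\nabla U(x^e)=0$, so $x^e$ is a critical point. The delicate issue is the Hessian $\nabla^2 U(x^e)=\mathrm{blkdiag}_{i}\,\nabla^2 W_i(x_i^e)-\nabla^2 W(z_0)$: it is a \emph{difference} of two positive-semidefinite matrices (the first block-diagonal and locally positive, the second convex and positive through the membership $x^e\in E$ of \eqref{eq:convex}), so its positivity is not automatic. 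I would close this gap by passing to the linearized level: linearizing each inequality \eqref{eq:conx} about $x_i^e$ turns it into a dissipation (KYP-type) inequality tying the storage Hessian $\nabla^2 W_i(x_i^e)$ to the incremental node signals, while linearizing the interconnection identity ties $\nabla^2 W(z_0)$ to the aggregate of those same signals; assembling the component inequalities through the grounded, convex network should then force $\nabla^2 U(x^e)\ge 0$ with kernel exactly the one-dimensional rotational direction noted after \eqref{eq:convex}. Proving that this assembly genuinely dominates the subtracted network term—i.e. that the local storages are \emph{stiff enough} relative to the convex network potential—is the crux; everything surrounding it is mechanical.

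Finally, with $U$ positive definite on the quotient by the uniform phase shift and $\dot U\le 0$, I would invoke the Lyapunov stability theorem on that reduced state space to conclude that $x^e$ is stable in the sense of Lyapunov. If attractivity is also wanted, the same $U$ feeds directly into a LaSalle argument identical in spirit to Theorem \ref{th:osci}, with the largest-invariant-set hypothesis there ruling out all motions other than equilibria.
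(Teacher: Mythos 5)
Your bookkeeping half is fine, but the construction is structurally wrong, and the step you yourself flag as ``the crux'' is not a gap that can be closed---it is where the approach fails. Taking \eqref{eq:conx} at face value forces you into the \emph{difference} $U=\sum_i W_i - W$, and you then need $\mathrm{blkdiag}_i\,\nabla^2 W_i(x_i^e)\geq\nabla^2 W(z_0)$ on top of the stated hypotheses. Nothing in the theorem supplies such a quantitative lower bound on the local storages, and the paper's own examples refute it: the storage \eqref{eq:WVSG} for the VSG does not depend on $\theta_i$ at all, so in the angle directions $\nabla^2 U(x^e)$ reduces to $-\nabla^2 W(z_0)\big|_{\theta\theta}$, which for a connected lossless network at an equilibrium in $E$ is negative definite modulo the uniform phase shift. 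Your candidate $U$ is a saddle there, and no linearized KYP argument can rescue it, because the ``stiffness'' of $W_i$ in $\theta_i$ that you hope will dominate the network term is simply absent.

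The resolution is that the printed sign in \eqref{eq:conx} is inconsistent with the very examples it is meant to cover, and the intended proof uses the opposite orientation together with the \emph{sum}. Differentiate \eqref{eq:WVSG} along \eqref{eq:vsg}: the swing part gives $M_i\omega_i\dot\omega_i=-D_i^p\omega_i^2-\Delta P_i\dot\theta_i$, and, substituting $\tau_i^q\dot V_i=-(V_i-V_i^e)-D_i^q\Delta Q_i$, the voltage part gives $-\frac{k_i\tau_i^q}{D_i^qV_i^eV_i}\dot V_i^2-\frac{k_i}{V_i^e}\Delta Q_i\,\dot{\ln V}_i$; so the components actually satisfy $\dot W_i\leq-\Delta P_i\dot\theta_i-\Delta Q_i\,\dot{\ln V}_i$ (up to the constant $k_i/V_i^e$), not the inequality as printed. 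With that reading, set $U=\sum_{i\in\mathcal D}W_i+W$: Lemma \ref{th:bd} gives $\dot W=\sum_{i\in\mathcal D}\bigl(\Delta P_i\dot\theta_i+\Delta Q_i\,\dot{\ln V}_i\bigr)$, the cross terms cancel, and $\dot U\leq0$ follows with no further work. The membership $x^e\in E$ then does exactly the job you were missing: the convexity condition \eqref{eq:convex} makes $W$ locally nonnegative with kernel only the rotational direction, so $U$ is positive definite on the quotient by the uniform phase shift---the network term supplies the definiteness in the angle directions that the $W_i$ need not provide. Convexity of $W$ is an obstruction in your difference but the enabling hypothesis in the sum; your closing Lyapunov and LaSalle remarks then go through unchanged.
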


Note that in both Theorem \ref{th:osci} and Theorem \ref{th:cs}, we do not specify the dynamic models but propose generic criteria which can accommodate to the heterogeneity. Moreover, our criteria involve only local information as shown in \eqref{eq:thmoci} and \eqref{eq:conx}. Thus, it can be employed and assessed individually which fulfills the scalability requirement.
\begin{remark}
	Compared to the classical passivity condition \cite{passivity} , the left-hand side of \eqref{eq:conx} can be regarded as a supply rate, in which the input is $(\Delta P_i, \Delta Q_i)$, however, the output is time derivatives $(\dot{\theta},\dot{\ln V})$. Thus, the condition \eqref{eq:conx} is called a passivity-like condition in this paper.
\end{remark}
\begin{remark}
	The convexity condition \eqref{eq:convex} of equilibrium plays an important role in Theorem \ref{th:cs}. It guarantees a well-defined distance such that the Lyapunov argument can be employed. Roughly speaking, it is satisfied when the load is light. See \cite{Dj} for more information about under what condition the power system satisfies \eqref{eq:convex}.
\end{remark}
\subsection{Examples of Dynamic Components}
In this section, we give two specific examples of dynamic components and demonstrate how they meet the criterion in Theorem \ref{th:cs}.

We first consider the inverter-interfaced renewable energy sources which are controlled by the virtual synchronous generator (VSG) technique \cite{vsg} as follows.
\begin{equation}\label{eq:vsg}
\left\lbrace 
\begin{aligned}
\dot{\theta}_i&=\omega_i\\
M_i\dot{\omega}_i&=-D_i^p\omega_i+P^e_i-P_i\\
\tau_i^q\dot{V}_i&=-(V_i-V_i^e)-D_i^q(Q_i-Q_i^e)
\end{aligned}\right. 
\end{equation}
where $D_i^p$ is the droop coefficient, $\tau_i^q$ is the time constant, $M_i$ is the virtual inertial, and the superscript $e$ stands for the equilibrium value.

To meet the criterion in Theorem \ref{th:cs}, one can choose
\begin{equation}\label{eq:WVSG}
W_i(x_i)=\frac{1}{2}M_i\omega_i^2+\frac{k_i}{D_{i}^q}\left( \frac{V_i}{V^e_i}-\ln V_i\right)
\end{equation}
where $k_i=V_i^e+D_i^qQ_i^e$ is a constant. One can verify that condition \eqref{eq:conx} holds and $x_i^e=\text{col}(0,V_i^e,\theta_i^e)$ is a local minimum of \eqref{eq:WVSG} when $k_i>0$.

Another example is the inverter-interfaced component with the droop controller \cite{xie} as follows.
\begin{equation}\label{eq:droop}
\left\lbrace 
\begin{aligned}
\tau_{i}^p\dot{\theta}_i&=-(\theta_i-\theta_i^e)-D_{i}^p(P_i-P_i^e)\\
\tau_{i}^q\dot{V}_i&=-(V_i-V_i^e)-D_{i}^q(Q_i-Q_i^e)
\end{aligned}\right. 
\end{equation}
where $D_i^p,D_i^q$ are droop coefficients, $\tau_i^p,\tau_i^q$ are time constants, and the superscript $e$ stands for the equilibrium value.
To meet the criterion in Theorem \ref{th:cs}, similarly one can choose
\begin{equation}\label{eq:Wdroop}
W_i(x_i)=\frac{(\theta_i-\theta_i^e)^2}{2D_{i}^p}+\frac{k_i}{D_{i}^q}\left( \frac{V_i}{V^e_i}-\ln V_i\right)
\end{equation}
where $k_i=V_i^e+D_i^qQ_i^e$ and one can verify that condition \eqref{eq:conx} holds and $x_i^e=\text{col}(V_i^e,\theta_i^e)$ is a local minimum of \eqref{eq:Wdroop} when $k_i>0$.
\section{Case Study}
Consider a 3-bus power system as showed in Figure~\ref{fig1}. Bus 1 and 2 are attached to a VSG \eqref{eq:vsg} and a droop controlled \eqref{eq:droop} inverter source, respectively. Bus 3 is connected to a constant power load $P_3^0+jQ_3^0$. The parameters and equilibrium of the system are listed in Table~\ref{tab1} and Table~\ref{tab2}.
\begin{figure}[h]
	\centering
	\includegraphics[width=0.67\hsize]{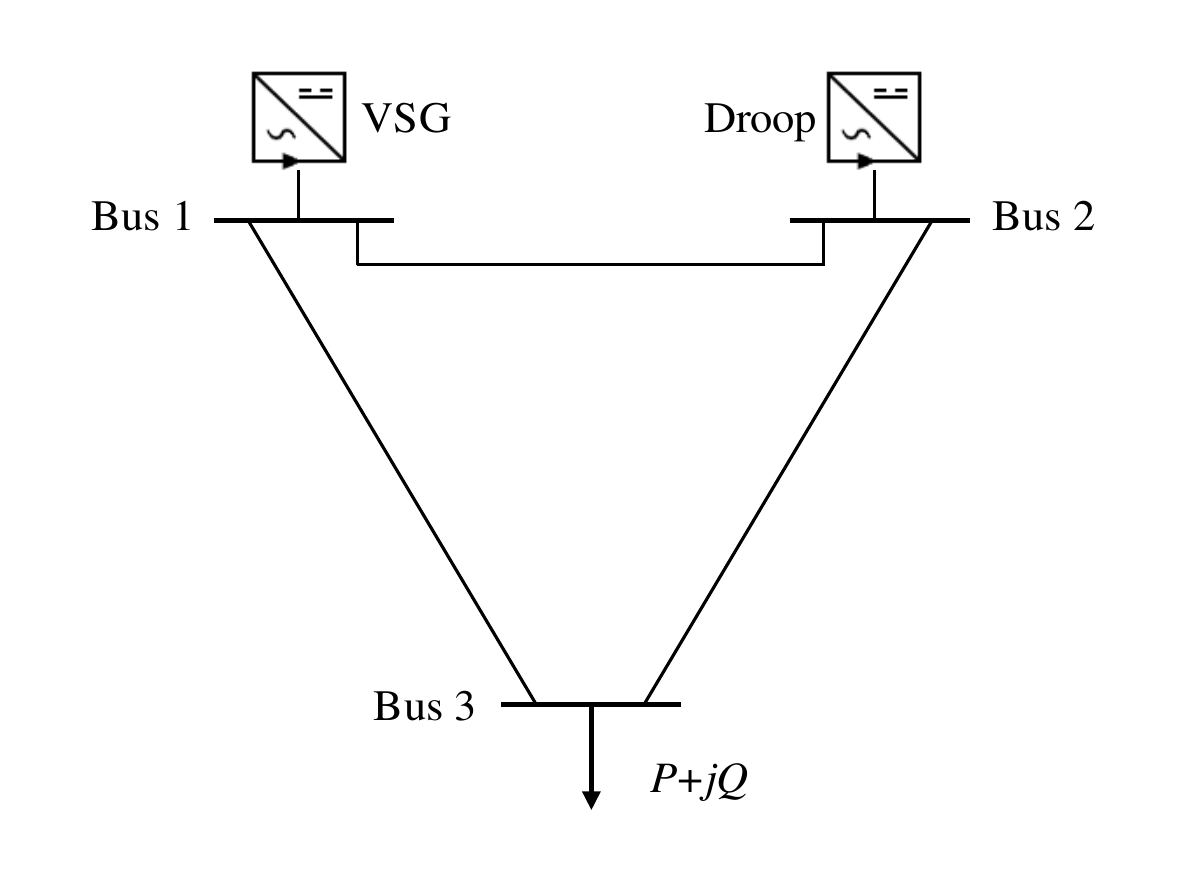}
	\caption{The schematic of the 3-bus system.}
	\label{fig1}
\end{figure}

One can verify that the equilibrium satisfies the convexity condition \eqref{eq:convex}. And both dynamic components meet the criterion in Theorem \ref{th:cs} with the given parameters. Thus, by Theorem \ref{th:cs}, it follows that the interconnected power system should be stable.
\begin{table}[!h]
	\centering
	\caption{System Parameters}
	\label{tab1}
	\begin{tabular}{l|l}
		\hhline
		Line Reactance         & 0.12  \\ \hline
		Virtual Inertial $M_1$   & 0.16 \\ \hline
		Droop Coefficients $D_1^p,D_1^q,D_2^p,D_2^q$   & 0.076, 0.03, 0.02, 0.02\\ \hline
		Time Constants $\tau_1^q,\tau_2^p,\tau_2^q$         & 0.3, 6.56, 8 \\ \hline
		Load Profile $P_3^0, Q_3^0$        & 0.03, 0.55 \\
		\hhline
	\end{tabular}
\end{table}
\begin{table}[!h]
	\centering
	\caption{System Equilibrium}
	\label{tab2}
	\begin{tabular}{c|c|c}
		\hhline
 $V_1^e\angle\theta_1^e$ & $V_2^e\angle\theta_2^e$ & $V_3^e\angle\theta_3^e$\\ \hline $1\angle0$ & $0.97\angle0.001$& $0.95\angle-0.0015$ \\
		\hhline
	\end{tabular}
\end{table}

To verify the theoretical result, suppose the system encounters a fault and undergoes a transient process. The dynamic response is showed in Figure~\ref{fig2} (a) and (b). The system is stable which is consistent with our claim. The voltage potential $V_p$ \eqref{eq:vp} is also depicted in Figure~\ref{fig2} (c). It is clear that $V_p$ tends to zero as the system converges to the stable equilibrium.

 \begin{figure}[!h]
 	\centering
 	\includegraphics[width=\hsize]{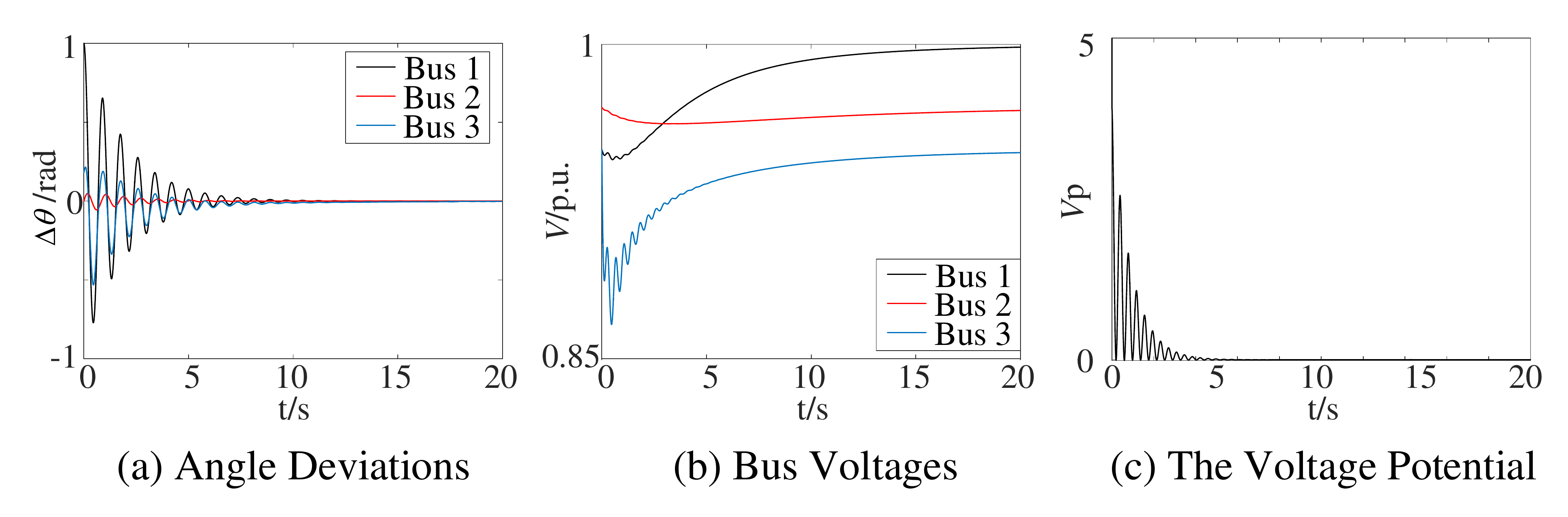}
 	\caption{(a) The angle deviations, (b) the voltage deviations, and (c) the voltage potential during the transient process.}
 	\label{fig2}
 \end{figure}

\section{Concluding Remarks}
We have presented a phasor-circuit theory perspective to handle the stability issue of power systems. Based on the observation that the symmetric AC three-phase power system can be regarded as a phasor circuit, we have extended and studied the concept of voltage potential with mathematical tools from complex analysis. Our results show that under the convexity condition, the system-wide stability can be ensured if each dynamic component meets a passivity-like condition, which can fit heterogeneous models and is scalable.

In future works, we will relax the convexity condition and enlarge the valid scope of our criteria. We believe that when the power system becomes more complex, in order to handle its stability issues, it is helpful or even necessary to review some basic circuit theories.

\end{document}